\documentstyle{article}

\def\noheaderplainsetup{%

\topmargin=-10pt \headheight=0pt \headsep=0pt  \oddsidemargin=0pt \evensidemargin=0pt  \textheight=9.87truein \textwidth=6.55truein}

\noheaderplainsetup

\begin{document}


\newcommand{\lmt}{\ell}
\newcommand{\col}[1]{\mbox{$#1$:}}
\newcommand{\chess}{\mbox{\em Chess}}
\newcommand{\checkers}{\mbox{\em Checkers}}
\newcommand{\elzz}[2]{ \langle #1\rangle\hspace{-2pt} \downarrow\hspace{-2pt} #2}
\newcommand{\elz}[1]{\mbox{$\parallel\hspace{-3pt} #1 \hspace{-3pt}\parallel$}}
\newcommand{\qelz}[1]{\mbox{$| #1 |$}}
\newcommand{\emptyrun}{\langle\rangle} 
\newcommand{\oo}{\bot}            
\newcommand{\pp}{\top}            
\newcommand{\xx}{\wp}               
\newcommand{\win}[2]{\mbox{\bf Wn}^{#1}_{#2}} 
\newcommand{\seq}[1]{\langle #1 \rangle}           


\newcommand{\plus}{\mbox{\hspace{1pt}\raisebox{0.05cm}{\tiny\boldmath $+$}\hspace{1pt}}}
\newcommand{\mult}{\mbox{\hspace{1pt}\raisebox{0.05cm}{\tiny\boldmath $\times$}\hspace{1pt}}}
\newcommand{\mminus}{\mbox{\hspace{1pt}\raisebox{0.05cm}{\tiny\boldmath $-$}\hspace{1pt}}}
\newcommand{\equals}{\mbox{\hspace{1pt}\raisebox{0.05cm}{\tiny\boldmath $=$}\hspace{1pt}}}
\newcommand{\notequals}{\mbox{\hspace{1pt}\raisebox{0.05cm}{\tiny\boldmath $\not=$}\hspace{1pt}}}

\newcommand{\mless}{\mbox{\hspace{1pt}\raisebox{0.05cm}{\tiny\boldmath $<$}\hspace{1pt}}}
\newcommand{\mgreater}{\mbox{\hspace{1pt}\raisebox{0.05cm}{\tiny\boldmath $>$}\hspace{1pt}}}
\newcommand{\mleq}{\mbox{\hspace{1pt}\raisebox{0.05cm}{\tiny\boldmath $\leq$}\hspace{1pt}}}
\newcommand{\mgeq}{\mbox{\hspace{1pt}\raisebox{0.05cm}{\tiny\boldmath $\geq$}\hspace{1pt}}}

\newcommand{\clt}{\mbox{\bf CL13}}
\newcommand{\clte}{\mbox{\bf CL14}}
\newcommand{\cltee}{\overline{\clte}}
\newcommand{\intimpl}{\mbox{\hspace{2pt}$\circ$\hspace{-0.14cm} \raisebox{-0.043cm}{\Large --}\hspace{2pt}}}
\newcommand{\sintimpl}{\mbox{\hspace{2pt}\raisebox{0.033cm}{\tiny $ | \hspace{-4pt} >$}\hspace{-0.14cm} \raisebox{-0.039cm}{\large --}\hspace{2pt}}}
\newcommand{\ade}{\mbox{\Large $\sqcup$}\hspace{1pt}}      
\newcommand{\ada}{\mbox{\Large $\sqcap$}\hspace{1pt}}      
\newcommand{\sst}{\mbox{\raisebox{-0.07cm}{\scriptsize $-$}\hspace{-0.2cm}$\pst$}}
\newcommand{\scost}{\mbox{\raisebox{0.20cm}{\scriptsize $-$}\hspace{-0.2cm}$\pcost$}}
\newcommand{\sqc}{\mbox{\hspace{2pt}\small \raisebox{0.0cm}{$\bigtriangleup$}\hspace{2pt}}}
\newcommand{\sqci}{\mbox{\scriptsize \raisebox{0.0cm}{$\bigtriangleup$}}}
\newcommand{\sqd}{\mbox{\hspace{2pt}\small \raisebox{0.06cm}{$\bigtriangledown$}\hspace{2pt}}}
\newcommand{\sqdi}{\mbox{\scriptsize \raisebox{0.05cm}{$\bigtriangledown$}}}
\newcommand{\sqe}{\mbox{\large \raisebox{0.07cm}{$\bigtriangledown$}}}
\newcommand{\sqa}{\mbox{\large \raisebox{0.0cm}{$\bigtriangleup$}}}
\newcommand{\mld}{\vee}    
\newcommand{\mlc}{\wedge}  
\newcommand{\tgd}{\mbox{\hspace{2pt}$\vee$\hspace{-1.29mm}\raisebox{0.1mm}{\rule{0.13mm}{2mm}}\hspace{5pt}}}    
\newcommand{\tgc}{\mbox{\hspace{2pt}$\wedge$\hspace{-1.29mm}\raisebox{0.02mm}{\rule{0.13mm}{2mm}}\hspace{5pt}}}    
\newcommand{\tge}{\hspace{1pt}\mbox{\Large $\vee$\hspace{-1.84mm}\raisebox{0.1mm}{\rule{0.13mm}{3.0mm}}\hspace{6pt}}}
\newcommand{\tga}{\mbox{\hspace{1pt}\Large $\wedge$\hspace{-1.84mm}\raisebox{0.02mm}{\rule{0.13mm}{3.0mm}}\hspace{6pt}}}
\newcommand{\tgpst}{\mbox{\raisebox{-0.01cm}{\scriptsize $\wedge$}\hspace{-4pt}\raisebox{0.06cm}{\small $\mid$}\hspace{2pt}}}
\newcommand{\tgpcost}{\mbox{\raisebox{0.12cm}{\scriptsize $\vee$}\hspace{-3.8pt}\raisebox{0.04cm}{\small $\mid$}\hspace{2pt}}}
\newcommand{\tgst}{\mbox{\raisebox{-0.05cm}{$\circ$}\hspace{-0.12cm}\raisebox{0.05cm}{\small $\mid$}\hspace{2pt}}}
\newcommand{\tgcost}{\mbox{\raisebox{0.12cm}{$\circ$}\hspace{-0.12cm}\raisebox{0.04cm}{\small $\mid$}\hspace{2pt}}}
\newcommand{\tgpi}{\mbox{\hspace{2pt}\raisebox{0.033cm}{\tiny $>$}\hspace{-0.28cm} \raisebox{-2.3pt}{\LARGE --}\hspace{2pt}}}
\newcommand{\tgbi}{\mbox{\hspace{2pt}$\circ$\hspace{-0.26cm} \raisebox{-2.3pt}{\LARGE --}\hspace{2pt}}}
\newcommand{\mle}{\mbox{\hspace{1pt}\Large $\vee$}\hspace{1pt}}    
\newcommand{\mla}{\mbox{\hspace{1pt}\Large $\wedge$}\hspace{1pt}}  
\newcommand{\add}{\hspace{0pt}\sqcup}                      
\newcommand{\adc}{\hspace{0pt}\sqcap}                      
\newcommand{\gneg}{\neg}                  
\newcommand{\rneg}{\neg}               
\newcommand{\pneg}{\neg}               
\newcommand{\mli}{\rightarrow}                     
\newcommand{\intf}{\$}               
\newcommand{\tlg}{\bot}               
\newcommand{\twg}{\top}               

\newcommand{\pst}{\mbox{\raisebox{-0.01cm}{\scriptsize $\wedge$}\hspace{-4pt}\raisebox{0.16cm}{\tiny $\mid$}\hspace{2pt}}}
\newcommand{\cla}{\mbox{\large $\forall$}\hspace{1pt}}      
\newcommand{\cle}{\mbox{\large $\exists$}\hspace{1pt}}        
\newcommand{\pintimpl}{\mbox{\hspace{2pt}\raisebox{0.033cm}{\tiny $>$}\hspace{-0.18cm} \raisebox{-0.043cm}{\large --}\hspace{2pt}}}
\newcommand{\pcost}{\mbox{\raisebox{0.12cm}{\scriptsize $\vee$}\hspace{-4pt}\raisebox{0.02cm}{\tiny $\mid$}\hspace{2pt}}}
\newcommand{\st}{\mbox{\raisebox{-0.05cm}{$\circ$}\hspace{-0.13cm}\raisebox{0.16cm}{\tiny $\mid$}\hspace{2pt}}}
\newcommand{\cost}{\mbox{\raisebox{0.12cm}{$\circ$}\hspace{-0.13cm}\raisebox{0.02cm}{\tiny $\mid$}\hspace{2pt}}}


\newtheorem{theoremm}{Theorem}[section]
\newtheorem{conjecturee}[theoremm]{Conjecture}
\newtheorem{exercisee}[theoremm]{Exercise}
\newtheorem{definitionn}[theoremm]{Definition}
\newtheorem{lemmaa}[theoremm]{Lemma}
\newtheorem{propositionn}[theoremm]{Proposition}
\newtheorem{conventionn}[theoremm]{Convention}
\newtheorem{examplee}[theoremm]{Example}
\newtheorem{remarkk}[theoremm]{Remark}
\newtheorem{factt}[theoremm]{Fact}
\newtheorem{claimm}[theoremm]{Claim}

\newenvironment{conjecture}{\begin{conjecturee}}{\end{conjecturee}}
\newenvironment{definition}{\begin{definitionn} \em}{ \end{definitionn}}
\newenvironment{theorem}{\begin{theoremm}}{\end{theoremm}}
\newenvironment{lemma}{\begin{lemmaa}}{\end{lemmaa}}
\newenvironment{proposition}{\begin{propositionn} }{\end{propositionn}}
\newenvironment{convention}{\begin{conventionn} \em}{\end{conventionn}}
\newenvironment{remark}{\begin{remarkk} \em}{\end{remarkk}}
\newenvironment{proof}{ {\bf Proof.} }{\  $\Box$ \vspace{.15in} }
\newenvironment{example}{\begin{examplee} \em}{\end{examplee}}
\newenvironment{exercise}{\begin{exercisee} \em}{\end{exercisee}}
\newenvironment{fact}{\begin{factt} \em}{\end{factt}}
\newenvironment{claim}{\begin{claimm} \em}{\end{claimm}}

\title{On the toggling-branching recurrence of Computability Logic}
\author{Meixia Qu$^{1,2}$ \space Junfeng Luan$^1$ \space Daming Zhu$^1$\\ \\
{\small $^1$School of Computer Science and Technology, Shandong University;}\\
{\small $^2$School of Mechanical, Electrical\&Information Engineering, Shandong University at Weihai}}
\date{}
\maketitle
\begin{abstract}
We introduce a new, substantially simplified version of the toggling-branching recurrence operation of Computability Logic, prove its equivalence to Japaridze's old, ``canonical'' version, and also prove that both versions preserve the static property of their arguments.
\end{abstract}

\noindent {\em Keywords}: Computability logic; Game semantics; Interactive computation; Static games

\section{Introduction}\label{sintr}

{\em Computability logic} (CoL), introduced by Japaridze in \cite{Jap0} and extensively studied in recent years (\cite{Bauer}-\cite{Japxure2} and many more), is a systematic and still-evolving formal theory of computability. In it, computational problems are seen as games between two players: a machine and its environment. Logical operators stand for operations on games, and ``truth'' is seen as existence of an algorithmic solution, i.e. of a machine's winning strategy.

The {\em toggling} group of operations, introduced and motivated in \cite{Japto}, is an important and indispensable kind in the collection of game operations studied in CoL. It comprises the so called toggling conjunction and disjunction, toggling quantifiers, toggling-parallel-recurrences, and toggling-branching recurrences. Their common feature is that the corresponding player (machine in the case of disjunction-style operators, and environment in the case of conjunction-style operators) is required to choose one of the many components of the compound game; unlike the case with what are called the {\em choice} operators, however, choices associated with the toggling operations can be reconsidered any finite number of times, with only the final choice being the one that determines the outcome of the play.
So far the least studied (apparently for the reason of being hardest-to-analyze) of all toggling operations is toggling-branching recurrence $\tgst$, to which the present paper is exclusively devoted.

In CoL, when analyzing strategies, the question on the relative speeds of the players is never relevant. That is because CoL restricts its attention to the sub-class of games termed {\em static}. For this reason, whenever a new game operation is introduced, one needs to make sure that it preserves the static property of games, for ``otherwise many things can go wrong'' (\cite{Japface}). Japaridze \cite{Japto}, however, did not give a proof of the fact that the class of static games is closed under $\tgst$ (while, at the same time, such a closure property was proven for all other toggling operations). Among the contributions of the present paper to Computability Logic as an ambitious long-term research project is doing this unsettlingly missing piece of work (Theorem \ref{may19}), necessary for $\tgst$ to qualify as a full-fledged member of the family of game operations studied in CoL.

The ``canonical'' definition of $\tgst$ given in \cite{Japto}, while directly reflecting the intuitions associated with this operation, is technically very involved, which might impede any future progress in finding syntactic descriptions of the logic induced by  $\tgst$. To make such progress feasible, we introduce a new, significantly simplified version of
toggling-branching recurrence (Definition \ref{defc21}), verify that it preserves the static property of games (Theorem \ref{ter18}) just like the old version does, and then prove its logical equivalence to the old version (Theorem \ref{ter20}). Due to this equivalence, from now on, in all relevant contexts one can safely focus on the new, technically simple version of toggling-branching recurrence without meanwhile losing the intuitions underlying the old version. An impetus to our present investigation was provided
by the fact that introducing  a similar simplification for the ordinary, ``non-toggling'' branching recurrence operation $\st$ in \cite{Japface} almost immediately resulted in a long-awaited and long-overdue breakthrough in syntactically taming that operation (\cite{Japtam1,Japtam2}), and a number of other interesting, hardly-possible-to-achieve-earlier results (\cite{Japxure,Japxure2}).

The intended audience for this paper is expected to be familiar with the main concepts of CoL. If not, it would be necessary and sufficient to consult the first ten sections of \cite{Japfi} for a very well written and readable introduction to the subject. A more compact albeit less recommended survey of CoL can be found in \cite{Japi}.

\section{Preliminaries}\label{s2tb}

In this paper our attention is exclusively limited to constant games, and when we say ``game'', it is to be understood as ``constant game''. For known reasons, this does not yield any loss of generality.


Following \cite{Jap0,Japfi}, where $\Omega$ is a run and $v$ is a bit string, the expression $\Omega^{\preceq v}$  stands for the result of deleting from $\Omega$ all moves except those that look like $u.\alpha$ for some initial segment $u$ of $v$, and then further deleting the prefix ``$u.$'' from such moves. For example, if $\Omega$ = $\seq{\oo 0.\beta_1, \pp 111. \beta_2, \pp 01. \beta_2 ,  \pp 011. \beta_3 ,  \oo 010. \beta_4}$ and $v$ = ``0100...'', then $\Omega^{\preceq v}$ = $\seq{\oo\beta_1, \pp\beta_2, \oo \beta_4}$.


By an {\bf actual} node of a position $\Omega$ we mean a bitstring $v$ which is either empty, or else is $u0$ or $u1$ for some bitstring $u$ such that $\Omega$ contains the move $\col{u}$. This is the same as what \cite{Japfi} calls a ``node of the underlying bitstring tree structure of $\Omega$''. An actual node is said to be {\bf outer} (called a ``leaf'' in \cite{Japfi})  iff it is not a proper prefix of any other actual node of $\Omega$.

Remember from \cite{Japfi} that saying ``$\alpha$ is a legal move by player $\xx$ in the (legal) position $\Phi$ of $A$'', or ``$\xx\alpha$ is a legal labmove in the position $\Phi$ of $A$''  means that $\seq{\Phi,\xx\alpha}$ is a legal position of $A$.

Below we paraphrase Japaridze's  \cite{Japto} original definition of toggling-branching recurrences. See \cite{Japto} for the associated intuitions and additional explanations or insights.

\begin{definition}\label{nov10}
The {\bf toggling-branching recurrence} $\tgst A$ of a game $A$ is defined as follows:
\begin{itemize}
\item There are three types of legal moves in legal positions of $\tgst A$:
\begin{enumerate}
 \item {\em Switch moves}: A switch move can only be made by $\oo$, and such a move in a given position $\Phi$ should be $w$, where $w$ is an actual node of $\Phi$;
 \item {\em Replicative moves}: A replicative move can also only be made by $\oo$, and such a move in a given position $\Phi$ should be $\col{w}$, where $w$ is an outer actual node of $\Phi$;
 \item {\em Non-replicative moves}: A non-replicative move can be made by either player. Such a move by a player $\xx$ in a given position $\Phi$ should be $w.\alpha$, where $w$ is an actual node of $\Phi$ and $\alpha$ is a move such that, for any infinite bitstring $v$, $\alpha$ is a legal move by $\xx$ in the position $\Phi^{\preceq wv}$ of $A$.

\end{enumerate}

\item A legal run $\Gamma$ of $\tgst A$ is won by $\oo$ iff there are only finitely many switch moves made in $\Gamma$ and $\Gamma^{\preceq t}$ is a $\oo$-won run of $A$, where $t$ is the last one of switch moves in $\Gamma$ with infinitely many 0s appended to it. If no switch moves were made at all, then the above $t$ is the infinite string of 0s.

\end{itemize}
\end{definition}

The dual {\bf toggling-branching corecurrence} $\tgcost A$ is defined similarly, with $\oo$ and $\pp$ interchanged. An equivalent way to define $\tgst A$ is by stipulating that $\tgst A$ = $\pneg\tgcost\pneg A$.

\section{Static property of the old toggling-branching recurrences} \label{s22tog}

Following \cite{Jap0}, we say that a unary game operation $Op$ is {\bf static} if it preserves the static property of games. In other words, $Op$(A) is a static game whenever A is a static game. The goal of this section is to show that, just like all other operations studied in CoL, the operations $\tgst$ and $\tgcost$ are static.

\begin{lemma}\label{l14}

Assume $A$ is a static game, $\xx$ is either player, $\Delta$ is a $\xx$-illegal run of $\tgst A$, and $\Delta$ is a $\xx$-delay of $\Gamma$. Then $\Gamma$ is also a $\xx$-illegal run of $\tgst A$.

\end{lemma}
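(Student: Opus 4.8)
The plan is to read off the first illegal labmove that $\Delta$ commits and show that, back in $\Gamma$, it is again the first illegal labmove. Let $\seq{\Psi,\xx\alpha}$ be the shortest illegal prefix of $\Delta$, so that $\Psi$ is a legal position of $\tgst A$ while $\xx\alpha$ is illegal in $\Psi$. Since $\Delta$ and $\Gamma$ have the same $\xx$-labmoves in the same order, the labmove $\xx\alpha$ also occurs in $\Gamma$; write $\Gamma=\seq{\Phi,\xx\alpha,\ldots}$. Routine bookkeeping with the definition of ``$\xx$-delay'' yields that $\Phi$ contains exactly the $\xx$-labmoves of $\Psi$, that the $\bar\xx$-labmoves of $\Phi$ form an initial segment of those of $\Psi$, and that $\Psi$ is itself a $\xx$-delay of the run obtained from $\Phi$ by appending, in their $\Psi$-order, those $\bar\xx$-labmoves of $\Psi$ not occurring in $\Phi$. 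In particular $\Phi$ has no more replicative labmoves than $\Psi$ --- and exactly the same ones when $\oo=\xx$ --- so $\Phi$ has no more actual nodes than $\Psi$, with the same actual and the same outer nodes when $\oo=\xx$.

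The technical core is a transport claim, stated for an arbitrary player $\xi$: if $Q$ is a legal position of $\tgst A$, the labmove $\xi\mu$ is illegal in $Q$, and $P$ is a position standing to $Q$ exactly as $\Phi$ stands to $\Psi$ above (the same $\xi$-labmoves, the $\bar\xi$-labmoves of $P$ an initial segment of those of $Q$, and $Q$ a $\xi$-delay of the run obtained from $P$ by appending the $\bar\xi$-labmoves of $Q$ that are not in $P$), then either $P$ is an illegal position or $\xi\mu$ is illegal in $P$. I would prove this by going through the three move types of Definition \ref{nov10}. A malformed labmove, or a switch or replicative labmove by the wrong player, is illegal in every position. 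A switch (resp.\ replicative) labmove, being made by $\oo$, is illegal in $Q$ precisely because its node fails to be an actual (resp.\ outer actual) node of $Q$; since $P$ and $Q$ then have the same actual and the same outer nodes, the move stays illegal in $P$. For a non-replicative labmove $w.\beta$: if $w$ is not an actual node of $P$ --- which in particular covers the case where $w$ is not an actual node of $Q$ --- then $w.\beta$ is illegal in $P$; otherwise $w$ is an actual node of both, and we fix an infinite bitstring $v$ witnessing that $\beta$ is an illegal $\xi$-move in the position $Q^{\preceq wv}$ of $A$ (which is legal, $Q$ being legal) and transport this illegality down to $P^{\preceq wv}$.

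Granting the transport claim, the lemma closes by a case split on $\Phi$. If $\Phi$ is a legal position, the claim --- applied with $\xi:=\xx$, $P:=\Phi$, $Q:=\Psi$, $\mu:=\alpha$, whose conclusion can then only be that $\xx\alpha$ is illegal in $\Phi$ --- makes $\seq{\Phi,\xx\alpha}$ the shortest illegal prefix of $\Gamma$, and it ends in an $\xx$-labmove, so $\Gamma$ is $\xx$-illegal. If $\Phi$ is not legal, let $\seq{\Theta,\eta}$ be its --- hence also $\Gamma$'s --- shortest illegal prefix, with $\Theta$ a legal position and $\eta$ illegal in $\Theta$; suppose toward a contradiction that $\eta$ is $\bar\xx$-labeled, say $\eta=\bar\xx\gamma$. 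Locating $\bar\xx\gamma$ in $\Delta$ and letting $\Theta'$ be the position of $\Delta$ preceding it, the delay data make $\seq{\Theta',\bar\xx\gamma}$ an initial segment of $\Psi$, hence a legal position; but $(\Theta',\Theta)$ satisfies the hypotheses of the transport claim with $\xi:=\bar\xx$ and $\mu:=\gamma$, and since $\Theta'$ is legal the conclusion is that $\bar\xx\gamma$ is illegal in $\Theta'$, contradicting that $\seq{\Theta',\bar\xx\gamma}$ is legal. Hence $\eta$ is $\xx$-labeled, and in either case $\Gamma$ is a $\xx$-illegal run of $\tgst A$.

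I expect the non-replicative case of the transport claim to be the only real difficulty. There $P^{\preceq wv}$ may genuinely have fewer $\bar\xi$-labmoves than $Q^{\preceq wv}$ (both legal positions of $A$, since $P$ and $Q$ are legal positions of $\tgst A$), so one cannot simply restrict. Instead I would append those missing $\bar\xi$-labmoves of $Q^{\preceq wv}$ after $\seq{P^{\preceq wv},\xi\beta}$, verify --- using that the projection $\Omega\mapsto\Omega^{\preceq wv}$ preserves $\xi$-delays and that the appended labmoves all sit behind $\xi\beta$ --- that $\seq{Q^{\preceq wv},\xi\beta}$ is a $\xi$-delay of the resulting run $D$, and then invoke the static property of $A$ in the form: a $\xi$-illegal run of $A$ that is a $\xi$-delay of another run forces that run to be $\xi$-illegal as well. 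As the appended labmoves are $\bar\xi$-labeled, the first illegal labmove of $D$ must fall inside $\seq{P^{\preceq wv},\xi\beta}$, and since $P^{\preceq wv}$ is legal this means $\beta$ is an illegal $\xi$-move in $P^{\preceq wv}$, so $w.\beta$ is illegal in $P$, as required. The delicate part throughout is the delay bookkeeping --- pinning down exactly how $\Omega\mapsto\Omega^{\preceq wv}$ interacts with delays and with the appended labmoves --- and that is where I would expect to spend most of the effort.
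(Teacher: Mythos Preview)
Your proposal is correct and its technical core --- the case analysis on the form of the offending labmove, together with the ``append the missing $\bar\xi$-labmoves, project by $(\cdot)^{\preceq wv}$, invoke staticness of $A$'' manoeuvre in the non-replicative case --- is exactly what the paper does. The organization, however, is genuinely different. The paper proceeds by induction on the length of the shortest $\xx$-illegal initial segment of $\Delta$: when $\Phi$ turns out to be $\bar\xx$-illegal, it observes that the shortest illegal prefix of $\Gamma$ is then strictly shorter than $\seq{\Psi,\xx\alpha}$, applies the induction hypothesis with the roles of $\Gamma$ and $\Delta$ (and of $\xx$ and $\bar\xx$) swapped --- using that $\Gamma$ is a $\bar\xx$-delay of $\Delta$ --- and derives that $\Delta$ is $\bar\xx$-illegal, a contradiction. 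You instead isolate a symmetric ``transport claim'' valid for an arbitrary player $\xi$ and apply it twice: once with $\xi=\xx$ to handle the main case, and once with $\xi=\bar\xx$ to rule out directly that the first illegality in $\Phi$ is $\bar\xx$-labeled. Your route avoids the slightly unusual induction with role-swapping and makes the $\xx/\bar\xx$ symmetry explicit; the paper's route is shorter on the page because the role-swap lets the single case analysis do double duty. Both are sound, and the delicate delay bookkeeping you flag (projections commuting with delays, and verifying that $(\Theta',\Theta)$ really satisfies the transport hypotheses with $\xi=\bar\xx$) is exactly where the work lies in either version.
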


\begin{proof} The present proof very closely follows the proofs of similar lemmas in \cite{Jap0,Japface,Japxure}. It proceeds by induction on the length of the shortest $\xx$-illegal initial segment of $\Delta$.

Assume that $\seq{\Psi,\xx\alpha}$ is the shortest $\xx$-illegal initial segment of $\Delta$. Let $\seq{\Phi,\xx\alpha}$ be the shortest initial segment of $\Gamma$ containing all $\xx$-labeled moves of $\seq{\Psi,\xx\alpha}$.

If $\Phi$ is a $\xx$-illegal position of $\tgst A$, then $\Gamma$ is also $\xx$-illegal because $\Phi$ is an initial segment of $\Gamma$, and this completes the proof. So, for the rest of this proof, assume that $\Phi$ is not a $\xx$-illegal position of $\tgst A$. We claim that $\Phi$ is a legal position of $\tgst A$. Indeed, suppose this is not the case. Then $\Phi$ should be $\pneg\xx$-illegal\footnote{Remember that, in CoL, $\pneg\xx$ means $\xx$'s adversary.}. Then $\Gamma$, as an extension of $\Phi$, is also a $\pneg\xx$-illegal run of $\tgst A$. If so, $\Phi$ is an illegal initial segment of $\Gamma$ which is obviously  shorter than $\seq{\Psi,\xx\alpha}$. By the induction hypothesis\footnote{With $\Gamma$ in the role of $\Delta$ and $\pneg\xx$ in the role of $\xx$ in the statement of the present lemma.}, any run for which $\Gamma$ is a $\pneg\xx$-delay, would be $\pneg\xx$-illegal. Lemma 4.6 of \cite{Jap0} states that, if a run $\Pi$ is a $\xx$-delay of a run $\Sigma$, then $\Sigma$ is a $\pneg\xx$-delay of $\Pi$. So, $\Gamma$ is a $\pneg\xx$-delay of $\Delta$. Hence $\Delta$ is a $\pneg\xx$-illegal run of $\tgst A$. This is a contradiction with our assumption that $\Delta$ is $\xx$-illegal.

Next we will show that $\seq{\Phi,\xx\alpha}$ is an illegal (while
$\Phi$ being legal) position of $\tgst A$ because $\seq{\Psi,\xx\alpha}$ is an illegal (while
$\Psi$ being legal) position of $\tgst A$. Then, as desired, $\Gamma$ will be found to be an illegal position of $\tgst A$ because $\seq{\Phi,\xx\alpha}$ is an initial segment of it.  There are two cases to consider depending on the player.

{\em Case 1}: $\xx$ = $\pp$. There are two possible reasons to why $\xx\alpha$ is an illegal labmove in the position $\Psi$ of $\tgst A$.

{\em Reason 1}: $\alpha$ does not have the form of $u.\beta$ for any actual node $u$ of $\Psi$. Note that the subsequence of $\oo$-labeled moves of $\Phi$ is an initial segment of that of $\Psi$. This implies that any actual node of $\Phi$ must be an actual node of $\Psi$. If so, $\xx\alpha$ is an illegal labmove in position $\Phi$ of $\tgst A$. So, $\seq{\Phi,\xx\alpha}$ is a $\xx$-illegal position of $\tgst A$ as desired.

{\em Reason 2}: $\alpha$ has the form of $u.\beta$ for an actual node $u$ of $\Psi$, but, $\seq{\Psi,\xx\alpha}^{\preceq v}$ is not a legal position of $A$, where $v$ is an infinite extension of $u$. Let $\Theta$ be the sequence of $\pneg\xx$-labeled moves of $\Psi$ that are not in $\Phi$. We can see that
$\seq{\Psi,\xx\alpha}$ is a $\xx$-delay of $\seq{\Phi,\xx\alpha,\Theta}$ and hence $\seq{\Psi,\xx\alpha}^{\preceq v}$ is a $\xx$-delay of $\seq{\Phi,\xx\alpha,\Theta}^{\preceq v}$. According to our assumption, $\Psi$ is a legal position of $\tgst A$, implying that $\Psi^{\preceq v}$ must be a legal position of $A$. Therefore $\seq{\Psi,\xx\alpha}^{\preceq v}$ is a $\xx$-illegal position of $A$. From the fact that $A$ is static, in conjunction with clause 1 of Lemma 5.1 of \cite{Japto} (``If $\Pi$ is a $\xx$-delay of $\Sigma$ and $\Pi$ is a $\xx$-illegal run of $A$, then $\Sigma$ is also a $\xx$-illegal run of $A$''), $\seq{\Phi,\xx\alpha,\Theta}^{\preceq v}$ is a $\xx$-illegal position of $A$. Furthermore, since $\Theta$ only contains $\pneg\xx$-labeled moves, $\seq{\Phi,\xx\alpha}^{\preceq v}$ is also a $\xx$-illegal position of $A$. Consequently, $\seq{\Phi,\xx\alpha}$ is a $\xx$-illegal position of $\tgst A$.

{\em Case 2}: $\xx$ = $\oo$. There are also two possible reasons to why $\xx\alpha$ is an illegal labmove in the position $\Psi$ of $\tgst A$ .

{\em Reason 1}: $\alpha$ does not have the form of $u$, $u.\beta$ or $\col{l}$ for any actual node $u$ of $\Psi$ and any outer actual node $l$ of $\Psi$. We can see that $\Phi$ and $\Psi$ have the same subsequence of $\oo$-labeled moves and hence the same actual and outer nodes. Also, as we already know, $\Phi$ is a legal position of $\tgst A$. So, $\seq{\Phi,\xx\alpha}$ is a $\xx$-illegal position of $\tgst A$.

{\em Reason 2}: $\alpha$ has the form of $u.\beta$ for an actual node $u$ of $\Psi$, but $\seq{\Psi,\xx\alpha}^{\preceq v}$ is not a legal position of $A$, where $v$ is an infinite extension of $u$. Arguing precisely as we did in {\em Reason 2} of {\em Case 1}, we again find that $\seq{\Phi,\xx\alpha}$ is a $\xx$-illegal position of $\tgst A$.
\end{proof}

\begin{theorem}\label{may19}
The operations $\tgst$ and $\tgcost$ are static.

\end{theorem}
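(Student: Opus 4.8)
The plan is to reduce the theorem to Lemma~\ref{l14} together with the static hypothesis on the argument game. Write $B$ for the argument and assume $B$ is static; I will argue that $\tgst$ is static, and obtain $\tgcost$ at the end by duality. Unwinding the definition of ``static'', what has to be shown is: for either player $\xx$ and any runs $\Gamma,\Delta$ with $\Delta$ a $\xx$-delay of $\Gamma$, if $\Gamma$ is won by $\xx$ in $\tgst B$ then so is $\Delta$.

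First I would clear away the cases involving illegality. If $\Gamma$ is $\pneg\xx$-illegal (hence $\xx$-won), then by Lemma~4.6 of \cite{Jap0} $\Gamma$ is also a $\pneg\xx$-delay of $\Delta$, so Lemma~\ref{l14} --- applied with $\pneg\xx$, $\Gamma$, $\Delta$ in the roles of $\xx$, $\Delta$, $\Gamma$ --- gives that $\Delta$ is $\pneg\xx$-illegal, hence $\xx$-won. If $\Gamma$ is $\xx$-illegal the implication is vacuous. So I may assume that $\Gamma$ is a \emph{legal} run won by $\xx$; and then, if $\Delta$ is illegal, it cannot be $\xx$-illegal (Lemma~\ref{l14} would otherwise force $\Gamma$ to be $\xx$-illegal), so $\Delta$ is $\pneg\xx$-illegal and again $\xx$-won. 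This leaves the one substantive case: $\Gamma$ and $\Delta$ both legal, $\Delta$ a $\xx$-delay of $\Gamma$, $\Gamma$ won by $\xx$ --- show $\Delta$ won by $\xx$.

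For that case I would isolate two observations. \emph{(i)} Switch and replicative moves are $\oo$-labeled, so passing to a $\xx$-delay leaves the subsequence of switch moves intact (its internal order is preserved when $\xx=\oo$, and it is untouched when $\xx=\pp$); hence $\Gamma$ and $\Delta$ have the same number of switch moves --- in particular one has finitely many iff the other does --- and the associated string $t$ (the last switch move padded with $0$'s, or $000\ldots$ if there are none) is the same for both. \emph{(ii)} For every infinite bitstring $v$, $\Delta^{\preceq v}$ is a $\xx$-delay of $\Gamma^{\preceq v}$: the operation $\Omega\mapsto\Omega^{\preceq v}$ only deletes certain labmoves and relabels the survivors without reordering them, and the deletion criterion depends on a labmove's content rather than its position, so it removes exactly the same $\xx$- and $\pneg\xx$-labmoves from $\Gamma$ as from $\Delta$; from this the three defining clauses of a $\xx$-delay are inherited. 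Granting (i) and (ii) and fixing $t$ as in (i): if $\xx=\oo$, then $\Gamma$ being $\oo$-won supplies finitely many switch moves and an $\oo$-won run $\Gamma^{\preceq t}$ of $B$; by (i), $\Delta$ also has finitely many switch moves with the same $t$, and by (ii) $\Delta^{\preceq t}$ is an $\oo$-delay of the $\oo$-won run $\Gamma^{\preceq t}$ of the static game $B$, hence itself $\oo$-won, so $\Delta$ is $\oo$-won. If $\xx=\pp$, then $\Gamma$ being $\pp$-won (and legal) means either that $\Gamma$ has infinitely many switch moves --- whence so does $\Delta$ by (i), making $\Delta$ $\pp$-won --- or that $\Gamma^{\preceq t}$ (a legal run of the static game $B$) is not $\oo$-won, i.e.\ is $\pp$-won, whence by (ii) the $\pp$-delay $\Delta^{\preceq t}$ is $\pp$-won and, $\Delta$ having the same last switch move, $\Delta$ is $\pp$-won.

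Finally, for $\tgcost$ I would invoke $\tgcost B=\pneg\tgst\pneg B$ (equivalent to the identity stated just after Definition~\ref{nov10}) together with the standard fact that negation preserves the static property: $B$ static gives $\pneg B$ static, hence $\tgst\pneg B$ static by the above, hence $\pneg\tgst\pneg B=\tgcost B$ static. The step I expect to be the main obstacle is observation (ii) --- pinning down exactly which labmoves survive $\Omega\mapsto\Omega^{\preceq v}$ and checking that each clause in the definition of $\xx$-delay carries over; this is the $\tgst$-analogue of the clauses of Lemma~5.1 of \cite{Japto} relating $\xx$-delays to the $\preceq v$ operation, and is best packaged as an auxiliary lemma. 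Everything else is a short case split resting on Lemma~\ref{l14} and the static hypothesis on $B$.
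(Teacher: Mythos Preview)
Your proposal is correct and follows essentially the same approach as the paper: reduce $\tgcost$ to $\tgst$ by duality with $\pneg$, use Lemma~\ref{l14} (together with Lemma~4.6 of \cite{Jap0}) to dispose of the illegal cases and reduce to both $\Gamma$ and $\Delta$ legal, then exploit that $\Gamma$ and $\Delta$ share the same switch moves and that $\Delta^{\preceq v}$ is a $\xx$-delay of $\Gamma^{\preceq v}$ to invoke the static hypothesis on $B$. The paper organizes the illegality reduction slightly differently (it starts by assuming $\Delta$ is not $\pneg\xx$-illegal rather than casing on $\Gamma$) and asserts your observation~(ii) without comment, but the substance is identical.
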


\begin{proof}
Since $\tgcost A$ = $\pneg\tgst\pneg A$ and the operation $\pneg$ is already known to be static (from theorem 14.1 of \cite{Jap0}), it is sufficient to only consider $\tgst$. Assume $A$ is a static game, $\xx\in\{\oo,\pp\}$, $\Gamma$ is a $\xx$-won run of $\tgst A$, and $\Delta$ is a $\xx$-delay of $\Gamma$. We want to show that $\Delta$ is also a $\xx$-won run of $\tgst A$.

If $\Delta$ is a $\pneg\xx$-illegal run of $\tgst A$, it is automatically won by $\xx$ and the proof can be completed as desired. So, assume that $\Delta$ is not a $\pneg\xx$-illegal run of $\tgst A$. From Lemma \ref{l14}, $\Delta$ is not $\xx$-illegal either, for otherwise, $\Gamma$ would be $\xx$-illegal, which is a contradiction with our assumption that $\Gamma$ is a $\xx$-won run of $\tgst A$. So, $\Delta$ is a legal run of  $\tgst A$. Next, we show $\Gamma$ is also a legal run of $\tgst A$. By Lemma 4.6 of \cite{Jap0}, as already noted, the fact that $\Delta$ is a $\xx$-delay of $\Gamma$ implies that $\Gamma$ is a $\pneg\xx$-delay of $\Delta$, so, from Lemma \ref{l14}, $\Gamma$ is not $\pneg\xx$-illegal because, otherwise, $\Delta$ would be a $\pneg\xx$-illegal run of $\tgst A$. $\Gamma$ is not $\xx$-illegal either  because it is a $\xx$-won run of $\tgst A$. So,
$\Gamma$ is also a legal run of $\tgst A$. Thus, in what follows, we only need to consider the case of both $\Delta$ and $\Gamma$ being legal runs of $\tgst A$.

We now want to show that $\Delta$ is a $\xx$-won run of $\tgst A$. We will implicitly rely on the obvious observation that, since $\Delta$ is a $\xx$-delay of $\Gamma$, the two runs have the same quantity of switch moves and, if that quantity is finite, the last switch move of $\Gamma$ is the same as that of $\Delta$. There are two possible cases to consider depending on the player.

{\em Case 1}: $\xx$ = $\oo$. There must be finitely many switch moves in $\Gamma$ or else the latter would not be won by $\oo$. Let $v$ be the last one of switch moves in $\Gamma$ with infinitely many 0s appended to it. Then $\Gamma^{\preceq v}$ is a $\xx$-won run of $A$. Since $\Delta^{\preceq v}$ is a $\xx$-delay of $\Gamma^{\preceq v}$ and A is static, $\Delta^{\preceq v}$ is also a $\xx$-won run of $A$. Therefore, since $\Delta$ is a legal run of $\tgst A$, $\Delta$ is a $\xx$-won run of $\tgst A$.

{\em Case 2}: $\xx$ = $\pp$. If there are infinitely many switch moves in $\Gamma$, $\Delta$ also has infinitely many switch moves and, since $\Delta$ is a legal run of $\tgst A$, $\Delta$ is a $\xx$-won run of $\tgst A$. And if there are finitely many switch moves in $\Gamma$, then, for the same reasons as in {\em Case 1}, $\Delta$ is again a $\xx$-won run of $\tgst A$.
\end{proof}

\section{New version of toggling-branching recurrences}\label{intr}
In this section we introduce a technically new, very simple, definition of $\tgst$. From now on we will be referring to the old (defined in Section \ref{s2tb}) version of $\tgst$ and $\tgcost$ as
{\em\bf tight}, and calling the new version of these operations {\em\bf loose}. In order to avoid confusion, we shall use the symbols $\tgst^T$, $\tgcost^T$ for the tight version of $\tgst$, $\tgcost$, and the symbols $\tgst^L$, $\tgcost^L$ for the loose version.

Our definition takes its inspiration from \cite{Japface}, where a similar simplification was introduced for $\st$ (the ordinary, ``non-toggling'' branching recurrence).

\begin{definition}\label{defc21}
The {\bf loose toggling-branching recurrence} $\tgst^L A$ of a game $A$ is defined as follows:

\begin{itemize}
\item $\Gamma$ is a legal run of $\tgst^L A$ iff
\begin{enumerate}
\item Every labeled move of $\Gamma$ has one of the following forms:
\begin{enumerate}

\item $\oo w$ (called a {\em switch move}), where $w$ is a finite bitstring.

\item $\xx w.\alpha$, where $\xx \in{\{\pp,\oo\}}$, $w$ is a finite bitstring and $\alpha$ is a move.

\end{enumerate}

\item For any infinite bitstring $v$, $\Gamma^{\preceq v}$ is a legal run of $A$.
\end{enumerate}
\item A legal run $\Gamma$ of $\tgst^L A$ is won by $\oo$ iff there are only finitely many switch moves made in $\Gamma$ and $\Gamma^{\preceq t}$ is a $\oo$-won run of $A$, where $t$ is the last one of switch moves in $\Gamma$ with infinitely many 0s appended to it. If no switch moves were made at all, $t$ is the infinite string of 0s.

\end{itemize}

As always, the operation $\tgcost^L$ is defined in a symmetric way by interchanging $\pp$ with $\oo$. Equivalently, $\tgcost^L A$ = $\pneg \tgst^L \pneg A$.
\end{definition}

\section{Static property of the new toggling-branching recurrences} \label{s3two}

\begin{theorem}\label{ter18}
The operations $\tgst^L$ and $\tgcost^L$ are static.
\end{theorem}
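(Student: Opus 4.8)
The plan is to follow, step for step, the two-part argument of Section~\ref{s22tog} that proved Theorem~\ref{may19} for the tight operations, taking advantage of the fact that clause~1 of Definition~\ref{defc21} imposes only \emph{position-independent} shape constraints on moves, which makes the whole development a simplification of, rather than a genuine departure from, that earlier argument. As a first reduction, since $\tgcost^L A = \pneg\tgst^L\pneg A$ and $\pneg$ is already known to be static (Theorem~14.1 of \cite{Jap0}), it suffices to treat $\tgst^L$. So fix a static game $A$ throughout.

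The main step is to establish the loose counterpart of Lemma~\ref{l14}: if $\xx$ is either player, $\Delta$ is a $\xx$-illegal run of $\tgst^L A$, and $\Delta$ is a $\xx$-delay of $\Gamma$, then $\Gamma$ is a $\xx$-illegal run of $\tgst^L A$ as well. I would argue this by induction on the length of the shortest $\xx$-illegal initial segment $\seq{\Psi,\xx\alpha}$ of $\Delta$, taking $\seq{\Phi,\xx\alpha}$ to be the shortest initial segment of $\Gamma$ containing all $\xx$-labeled moves of $\seq{\Psi,\xx\alpha}$, and $\Theta$ to be the sequence of $\pneg\xx$-labeled moves of $\Psi$ not occurring in $\Phi$. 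If $\Phi$ is already $\xx$-illegal we are done; if $\Phi$ is $\pneg\xx$-illegal we are done by the induction hypothesis together with Lemma~4.6 of \cite{Jap0}, exactly as in the proof of Lemma~\ref{l14}; so assume $\Phi$ is legal and show that $\seq{\Phi,\xx\alpha}$ is $\xx$-illegal. There are only two ways for $\xx\alpha$ to be illegal in the legal position $\Psi$ of $\tgst^L A$: either (i) $\alpha$ does not have one of the shapes permitted by clause~1 of Definition~\ref{defc21} --- but that requirement makes no reference to the position, so $\xx\alpha$ is just as illegal in $\Phi$, and we are done; or (ii) $\alpha$ has the form $w.\beta$ and $\seq{\Psi,\xx\alpha}^{\preceq v}$ is not a legal run of $A$ for some infinite extension $v$ of $w$. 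In case (ii), exactly as in the proof of Lemma~\ref{l14}, $\seq{\Psi,\xx\alpha}$ is a $\xx$-delay of $\seq{\Phi,\xx\alpha,\Theta}$ and hence $\seq{\Psi,\xx\alpha}^{\preceq v}$ is a $\xx$-delay of $\seq{\Phi,\xx\alpha,\Theta}^{\preceq v}$; since $\Psi$ is a legal run of $\tgst^L A$, $\Psi^{\preceq v}$ is a legal run of $A$, so $\seq{\Psi,\xx\alpha}^{\preceq v}$ is in fact $\xx$-illegal; by $A$ being static and clause~1 of Lemma~5.1 of \cite{Japto}, $\seq{\Phi,\xx\alpha,\Theta}^{\preceq v}$ is $\xx$-illegal; and since $\Theta$ consists only of $\pneg\xx$-labeled moves sitting at the end, the first offending labmove already occurs in $\seq{\Phi,\xx\alpha}^{\preceq v}$, so $\seq{\Phi,\xx\alpha}^{\preceq v}$, and therefore $\seq{\Phi,\xx\alpha}$, is $\xx$-illegal. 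Note that this is literally ``Reason~2 of Case~1'' of the proof of Lemma~\ref{l14}, while ``Reason~1'' --- the shape-violation case --- has here become trivial because there are no actual-node conditions to maintain.

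With the lemma in hand, the second step transcribes the proof of Theorem~\ref{may19}. Fix a player $\xx$, a $\xx$-won run $\Gamma$ of $\tgst^L A$, and a $\xx$-delay $\Delta$ of $\Gamma$; we must show $\Delta$ is $\xx$-won. If $\Delta$ is $\pneg\xx$-illegal it is automatically won by $\xx$; otherwise, the lemma gives that $\Delta$ is not $\xx$-illegal (else $\Gamma$ would be, contradicting that $\Gamma$ is $\xx$-won), so $\Delta$ is legal; and, turning the delay around by Lemma~4.6 of \cite{Jap0} and applying the lemma again, $\Gamma$ is not $\pneg\xx$-illegal, while it is not $\xx$-illegal either, so $\Gamma$ is legal too. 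Since $\Delta$ is a $\xx$-delay of $\Gamma$, the two runs have the same number of switch moves and, when that number is finite, the same last switch move. If $\xx = \oo$, then $\Gamma$ has finitely many switch moves (else it would not be $\oo$-won); letting $v$ be the last switch move of $\Gamma$ with infinitely many $0$s appended (or the all-$0$ string if there are none), $\Gamma^{\preceq v}$ is a $\oo$-won run of $A$, and, as $\Delta^{\preceq v}$ is a $\xx$-delay of $\Gamma^{\preceq v}$ and $A$ is static, $\Delta^{\preceq v}$ is a $\oo$-won run of $A$; since $\Delta$ is legal, $\Delta$ is a $\oo$-won run of $\tgst^L A$. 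If $\xx = \pp$ and $\Gamma$ has infinitely many switch moves, then so does $\Delta$, and a legal run with infinitely many switch moves is won by $\pp$; and if $\Gamma$ has finitely many switch moves, the same projection argument as in the $\oo$-case --- applied to the string $t$ of Definition~\ref{defc21} --- shows that $\Delta^{\preceq t}$ fails to be $\oo$-won, so $\Delta$ is again won by $\pp$.

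The one place where I expect to have to be careful, and would write out in full, is the passage in the main step from ``$\seq{\Psi,\xx\alpha}^{\preceq v}$ is $\xx$-illegal in $A$'' to ``$\seq{\Phi,\xx\alpha}^{\preceq v}$ is $\xx$-illegal in $A$'': one must check that $\seq{\Psi,\xx\alpha}^{\preceq v}$ is a \emph{$\xx$-}delay (not merely a delay) of $\seq{\Phi,\xx\alpha,\Theta}^{\preceq v}$, so that clause~1 of Lemma~5.1 of \cite{Japto} genuinely applies, and that deleting the trailing $\pneg\xx$-labeled moves of $\Theta^{\preceq v}$ cannot remove the first offending labmove. This is, however, a purely sequence-theoretic manipulation identical to the one already performed in the proof of Lemma~\ref{l14}, so no new idea is required; every other part of the argument is routine bookkeeping, and the position-independence of clause~1 of Definition~\ref{defc21} ensures that the loose case is in fact strictly easier than the tight one.
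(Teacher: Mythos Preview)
Your proposal is correct and follows essentially the same approach as the paper: you reduce to $\tgst^L$ via the staticness of $\pneg$, prove the loose analogue of Lemma~\ref{l14} by the same induction (noting that the shape-violation case is now position-independent and hence trivial, while the $w.\beta$ case is handled verbatim as in Lemma~\ref{l14}), and then transcribe the proof of Theorem~\ref{may19} using this new lemma. The only cosmetic difference is that the paper splits the case analysis by player first and then by reason, whereas you split by reason first; the content is identical.
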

The rest of this section is devoted to a proof of the above theorem. As in Section \ref{s22tog}, considering only $\tgst^L$ is sufficient.
\begin{lemma}\label{115}

Assume $A$ is a static game, $\xx$ is either player, $\Delta$ is a $\xx$-illegal run of $\tgst^L A$, and $\Delta$ is a $\xx$-delay of $\Gamma$. Then $\Gamma$ is also a $\xx$-illegal run of $\tgst^L A$.

\end{lemma}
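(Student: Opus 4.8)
The plan is to mimic the proof of Lemma \ref{l14} almost line for line, since the loose version $\tgst^L$ differs from the tight version only in that (a) switch and non-replicative moves may refer to arbitrary finite bitstrings $w$ rather than to actual/outer nodes of the current position, (b) there are no replicative moves, and (c) legality of a position is tested by requiring $\Gamma^{\preceq v}$ to be a legal run of $A$ for \emph{every} infinite bitstring $v$, rather than only along certain branches. These simplifications remove exactly the subtleties (Reason 1 in both cases of Lemma \ref{l14}) that made the tight proof need to track actual nodes, so the argument should become shorter, not longer.

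I would proceed by induction on the length of the shortest $\xx$-illegal initial segment of $\Delta$. Let $\seq{\Psi,\xx\alpha}$ be that segment, and let $\seq{\Phi,\xx\alpha}$ be the shortest initial segment of $\Gamma$ containing all $\xx$-labeled moves of $\seq{\Psi,\xx\alpha}$. If $\Phi$ is itself $\xx$-illegal, we are done since $\Phi$ is an initial segment of $\Gamma$. Otherwise, exactly as in Lemma \ref{l14}, I argue that $\Phi$ must be a \emph{legal} position of $\tgst^L A$: if it were $\pneg\xx$-illegal, then $\Gamma$ (extending $\Phi$) would be $\pneg\xx$-illegal with a \emph{shorter} illegal initial segment, the induction hypothesis (applied with $\Gamma$ in the role of $\Delta$ and $\pneg\xx$ in the role of $\xx$, legitimate because by Lemma 4.6 of \cite{Jap0} $\Gamma$ is a $\pneg\xx$-delay of $\Delta$) would force $\Delta$ to be $\pneg\xx$-illegal, contradicting that $\Delta$ is $\xx$-illegal. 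It then suffices to show $\seq{\Phi,\xx\alpha}$ is $\xx$-illegal given that $\seq{\Psi,\xx\alpha}$ is $\xx$-illegal while $\Psi$ is legal.

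For this last step I split on why $\xx\alpha$ is illegal in $\Psi$. Since in the loose definition the only \emph{syntactic} constraint on a labmove is that it have the form $\oo w$ (if $\xx=\oo$) or $\xx w.\alpha$ with $w$ a finite bitstring, and since $\seq{\Psi,\xx\alpha}$ and $\seq{\Phi,\xx\alpha}$ contain literally the same $\xx$-labeled move $\xx\alpha$, the only way $\xx\alpha$ can be illegal is that $\alpha$ has the form $w.\beta$ and, for some infinite extension $v$ of $w$, $\seq{\Psi,\xx\alpha}^{\preceq v}$ fails to be a legal run of $A$; concretely, since $\Psi$ is legal (so $\Psi^{\preceq v}$ is legal), $\seq{\Psi,\xx\alpha}^{\preceq v}$ is $\xx$-illegal as a run of $A$. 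Now let $\Theta$ be the sequence of $\pneg\xx$-labeled moves of $\Psi$ not in $\Phi$; then $\seq{\Psi,\xx\alpha}$ is a $\xx$-delay of $\seq{\Phi,\xx\alpha,\Theta}$, hence $\seq{\Psi,\xx\alpha}^{\preceq v}$ is a $\xx$-delay of $\seq{\Phi,\xx\alpha,\Theta}^{\preceq v}$, and since $A$ is static, clause 1 of Lemma 5.1 of \cite{Japto} gives that $\seq{\Phi,\xx\alpha,\Theta}^{\preceq v}$ is $\xx$-illegal as a run of $A$; since $\Theta$ contains only $\pneg\xx$-labeled moves, already $\seq{\Phi,\xx\alpha}^{\preceq v}$ is $\xx$-illegal, so $\seq{\Phi,\xx\alpha}$ is a $\xx$-illegal position of $\tgst^L A$, whence $\Gamma$ is too.

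The main thing to be careful about — the only real obstacle — is the bookkeeping in the induction step verifying that $\seq{\Psi,\xx\alpha}$ is indeed a $\xx$-delay of $\seq{\Phi,\xx\alpha,\Theta}$ and that this delay relation survives the $^{\preceq v}$ projection; but this is precisely the manipulation already carried out in \cite{Jap0,Japface,Japxure} and in Reason 2 of Lemma \ref{l14}, and it goes through unchanged because the projection $\Omega \mapsto \Omega^{\preceq v}$ commutes with the delay relation in the relevant sense. Everything else is strictly simpler than in Lemma \ref{l14}, since we never have to reason about actual nodes, outer nodes, or replicative moves.
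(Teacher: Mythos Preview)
Your proposal is correct and follows essentially the same approach as the paper: the same induction on the length of the shortest $\xx$-illegal initial segment of $\Delta$, the same argument that $\Phi$ must be legal, and the same Reason-2 computation via the tail $\Theta$ of $\pneg\xx$-moves together with clause~1 of Lemma~5.1 of \cite{Japto}. The only difference is organizational---the paper retains the explicit split into $\xx=\oo$ versus $\xx=\pp$ and into Reason~1 (wrong syntactic form) versus Reason~2 (bad projection along some infinite $v$), whereas you collapse the two player cases and dismiss Reason~1 in passing, which is legitimate since in $\tgst^L$ the syntactic constraints on a labmove are position-independent (though your sentence ``the only way $\xx\alpha$ can be illegal is that $\alpha$ has the form $w.\beta$ \ldots'' should be read as ``the only non-trivial remaining case is \ldots'').
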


\begin{proof} As expected, the present proof is similar to our earlier proof of Lemma \ref{l14} but is considerably simpler. As before, it proceeds by induction on the length of the shortest illegal initial segment of $\Delta$. Assume $\seq{\Psi,\xx\alpha}$ is such a segment. Let $\seq{\Phi,\xx\alpha}$ be the shortest initial segment of $\Gamma$ containing all $\xx$-labeled moves of $\seq{\Psi,\xx\alpha}$.
If $\Phi$ is a $\xx$-illegal position of $\tgst^L A$, $\Gamma$ is also $\xx$-illegal because $\seq{\Phi,\xx\alpha}$ is an initial segment of $\Gamma$, and this  completes the proof. So, assume that $\Phi$ is not a $\xx$-illegal position of $\tgst^L A$. Then, arguing exactly as in the proof of Lemma \ref{l14}, we find that $\Phi$ is a legal position of $\tgst^L A$.

Now we are going to show that $\seq{\Phi,\xx\alpha}$ is a $\xx$-illegal position of $\tgst^L A$. This immediately implies the desired conclusion that $\Gamma$ is also a $\xx$-illegal run of $\tgst^L A$, because $\seq{\Phi,\xx\alpha}$ is an initial segment of it. There are two possible cases to consider.

{\em Case 1}: $\xx$ = $\oo$. There are two reasons to why $\xx\alpha$ is an illegal labmove in the position $\Psi$ of $\tgst^L A$.

{\em Reason 1}: $\alpha$ does not have the form of $w$ or $w.\beta$ for any bitstring $w$ and move $\beta$. Since $\Phi$ is a legal position of $\tgst^L A$, $\seq{\Phi,\xx\alpha}$ is a $\xx$-illegal position of $\tgst^L A$.

{\em Reason 2}: $\alpha$ has the form of $w.\beta$ for a bitstring $w$ and move $\beta$, but, $\seq{\Psi,\xx\alpha}^{\preceq v}$ is not a legal position of $A$, where $v$ is an infinite extension of $w$. The argument used in the corresponding case of the proof of Lemma \ref{l14} applies here without any changes.

{\em Case 2}: $\xx$ = $\pp$. There are again two possible reasons to why $\xx\alpha$ is an illegal labmove in the position $\Psi$ of $\tgst^L A$.

{\em Reason 1}: $\alpha$ does not have the form of $w.\beta$ for any bitstring $w$ and move $\beta$. Then, since $\Phi$ is a legal position of $\tgst^L A$, $\seq{\Phi,\xx\alpha}$ is then a $\xx$-illegal position of $\tgst^L A$ as desired.

{\em Reason 2}: $\alpha$ has the form of $w.\beta$ for a bitstring $w$ and move $\beta$, but, $\seq{\Psi,\xx\alpha}^{\preceq v}$ is not a legal position of $A$, where $v$ is an infinite extension of $w$. This case, again, is handled exactly as in the proof of Lemma \ref{l14}.
\end{proof}

To complete our proof of Theorem \ref{ter18}, assume $A$ is a static game,  $\Gamma$ is a $\xx$-won run of $\tgst^L A$, and $\Delta$ is a $\xx$-delay of $\Gamma$. We want to show that $\Delta$ is also a $\xx$-won run of $\tgst^L A$. Due to the same reasons as in our earlier proof of Theorem \ref{may19} (but relying on Lemma
\ref{115} instead of Lemma \ref{l14}), we only need to consider the case where both $\Gamma$ and $\Delta$ are legal runs of $\tgst^L A$. If so, continuing literally as in the proof of Theorem \ref{may19}, we find that $\Delta$ is indeed as desired.

\section{Equivalence between the two versions} \label{s3thr}

\begin{theorem}\label{ter20}
The tight and the loose versions of toggling-branching recurrences are logically equivalent, in the sense that the formulas $\tgst^T P \to \tgst^L P$ and $\tgst^L P \to \tgst^T P$ are uniformly valid.
\end{theorem}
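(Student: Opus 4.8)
The plan is to prove uniform validity of each implication by explicitly describing a machine (an HPM, or equivalently an EPM-style strategy) that wins the corresponding game for any $P$, using the standard ``copy-cat'' methodology of CoL: the machine maintains an imaginary play of the antecedent against an imaginary adversary while really playing the consequent, relaying moves between the two in a way that keeps the two plays ``synchronized''. The essential observation is that the tight and loose versions differ only in \emph{which runs count as legal}: every legal run of $\tgst^T P$ is, after forgetting the distinction between switch moves / replicative moves / non-replicative moves and just reading everything as a bitstring-tagged move, a legal run of $\tgst^L P$ (the tight legality conditions are strictly more restrictive — switch targets must be actual nodes, replicative targets must be outer actual nodes, etc.), whereas the converse fails because in a loose run a player may address a node $w$ that was never ``created'' by a replicative move. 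So the two implications are genuinely asymmetric and need different strategies.

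For $\tgst^T P \to \tgst^L P$: here the machine is $\pp$, it must win $\tgst^L P$ in the consequent while $\tgst^T P$ sits in the antecedent. I would have the machine behave as a literal copy-cat: whenever the real adversary ($\oo$) makes a move $\lambda$ in the consequent $\tgst^L P$, the machine checks whether $\lambda$ is a legal labmove of the antecedent $\tgst^T P$ in the imaginary position; if yes it copies $\lambda$ there and then copies the imaginary adversary's responses back out; if the adversary ever makes a move that is loose-legal but tight-illegal (e.g. switches to a non-actual node, or makes a non-replicative move at an uncreated node), then the consequent $\tgst^L P$ — wait, that is allowed in the loose game, so this cannot happen in that direction. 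The point is the \emph{machine's own} moves in the antecedent: $\pp$'s moves in $\tgst^T P$ must be tight-legal, and since the machine only ever echoes $\oo$'s $\tgst^L P$-moves which were themselves loose-legal, and since $\pp$ never needs to make switch or replicative moves (those are $\oo$'s prerogative in both versions), $\pp$'s echoed moves are automatically tight-legal (a non-replicative $\pp$-move $w.\alpha$ at an actual node with $\alpha$ legal in every $P^{\preceq wv}$). Legality of the machine's play in the consequent is likewise automatic since it merely echoes $\oo$'s antecedent moves, which are tight-legal hence loose-legal. Then one checks the winning condition: the two runs $\Gamma_{\text{ant}}$ and $\Gamma_{\text{cons}}$ have, up to the relabeling, the same moves, hence the same switch moves, hence the same $t$, hence $\Gamma_{\text{ant}}^{\preceq t}$ is $\oo$-won in $P$ iff $\Gamma_{\text{cons}}^{\preceq t}$ is; so if the antecedent is won by $\pp$ (i.e. $\oo$ did not win it) then so is the consequent. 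The delicate point to verify carefully is that ``actual node'' in the tight antecedent and the bare bitstring tags in the loose consequent really do track each other under the copy-cat, i.e. the set of actual nodes in the imaginary antecedent position always contains (indeed equals) the set of nodes addressed so far — which holds because $\oo$, playing tight, can only address actual nodes.

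For the harder direction $\tgst^L P \to \tgst^T P$: now the machine is $\pp$, must win $\tgst^T P$ in the consequent with $\tgst^L P$ in the antecedent. The asymmetry bites: the real adversary $\oo$ plays the tight consequent $\tgst^T P$, whose moves are \emph{more} restricted, so every $\oo$-move there is automatically loose-legal and can be echoed into the antecedent $\tgst^L P$ without trouble; conversely the imaginary antecedent-adversary may (in principle) make a loose-but-not-tight move, but since that adversary is imaginary the machine simply plays it as its \emph{own} move — and here is where we must check it is tight-legal in the consequent. The natural fix, exactly as in the non-toggling $\st$ case of \cite{Japface}, is \emph{not} a bare copy-cat but a copy-cat that also, on the machine's side, pre-emptively makes whatever replicative moves are needed: before echoing an imaginary antecedent move $w.\alpha$ or a switch to $w$ into the tight consequent, the machine first issues the sequence of replicative moves $\col{u}$ for the prefixes $u$ of $w$ that are outer actual nodes at the time, thereby ``growing the tree'' so that $w$ becomes an actual node and the subsequent move is tight-legal. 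Symmetrically, the machine must arrange that switch moves it copies out target \emph{actual} nodes of the consequent, which the tree-growing guarantees. One then argues that this bookkeeping terminates (only finitely many replicative moves per relayed move, since $w$ is finite), that it never changes $\Gamma^{\preceq v}$ for any relevant $v$ (replicative moves are invisible to the $\cdot^{\preceq v}$ projection), and that the winning condition transfers as before — the switch moves and their final target $t$ are unaffected by the inserted replicative moves, so $\Gamma_{\text{cons}}^{\preceq t}$ and $\Gamma_{\text{ant}}^{\preceq t}$ agree, and $\pp$ wins the consequent whenever it wins the antecedent. The main obstacle, and the place I would spend the most care, is precisely this step: making the synchronization between the replication-free loose antecedent and the replication-bearing tight consequent airtight — in particular, showing that the machine can always choose a consistent ``growth schedule'' of replicative moves so that every node ever addressed by either side in the loose antecedent corresponds to a genuine actual node in the tight consequent, and that no $\oo$-switch in the tight consequent ever forces the machine into an inconsistent state. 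I expect the uniformity of the strategy (it must be a single machine working for all $P$) to fall out for free, since nothing in the construction inspects $P$ beyond relaying its moves.
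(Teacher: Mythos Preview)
Your proposal has a genuine gap: you have the player roles in the antecedent backwards, and as a result you have swapped which direction needs the tree-growing trick and proposed an impossible strategy for the harder direction.

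Recall that $A\to B$ is $\gneg A\mld B$, so the machine ($\pp$) plays the antecedent with roles reversed. Concretely, $\tgst^T P\to\tgst^L P$ is $\tgcost^T\gneg P\mld\tgst^L P$; in the $\tgcost^T\gneg P$ component it is $\pp$ (the machine) who makes switch and replicative moves, while the adversary makes only non-replicative moves there. Your sentence ``$\pp$ never needs to make switch or replicative moves (those are $\oo$'s prerogative in both versions)'' is therefore false for the antecedent. When the real adversary, playing the loose consequent, makes a switch or a move $w.\alpha$ at an arbitrary bitstring $w$, the machine must echo this into the tight antecedent \emph{as a $\pp$-move of $\tgcost^T\gneg P$}, which requires $w$ to be an actual node --- and it is precisely here that the machine first issues replicative moves to grow the tree. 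So the first direction is not the trivial copy-cat you describe; it is the direction where your ``growth schedule'' idea actually works, and this is exactly what the paper's ROUTINE1 does.

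For the second direction $\tgst^L P\to\tgst^T P=\tgcost^L\gneg P\mld\tgst^T P$, your plan fails outright: in the tight consequent $\tgst^T P$ the machine is $\pp$, and $\pp$ \emph{cannot} make replicative moves in $\tgst^T$ --- only $\oo$ can. So the machine cannot ``pre-emptively grow the tree'' there. The real difficulty is that the adversary, in the loose antecedent, may make a move $w.\alpha$ at an arbitrary $w$, and the machine has no way to force $w$ to become an actual node of the tight consequent. The paper solves this with a completely different device: it maintains a dynamic injection $f$ from the outer actual nodes of the tight consequent (whose tree is controlled by the adversary's replicative moves) into bitstrings of the loose antecedent, routes moves through $f$ rather than copying them literally, and updates $f$ whenever the adversary replicates or addresses a deeper node. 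The invariant one needs is roughly that $\Sigma^{\preceq f(u)0^\infty}=\gneg\Pi^{\preceq u0^\infty}$ for each outer actual node $u$, which is what makes the winning-condition transfer go through. Your proposal does not contain this idea, and without it the second implication cannot be established.
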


\begin{proof}
  A greater part of this proof closely follows the proof of Theorem 4.1 of \cite{Japface} and the proof of the similar Theorem 3.4 of \cite{Japxure}.

   The uniform validity of $\tgst^T P \to \tgst^L P$ means that, for any static game $A$, there is an EPM ${\cal E}_1$ such that ${\cal E}_1$ wins $\tgst^T A \to \tgst^L A$, i.e. $\tgcost^T \pneg A \mld \tgst^L A$. Here we design such an EPM/algorithm ${\cal E}_1$ as a machine that repeats the following routine (ROUTINE1) over and over again, maybe infinitely many times. At any stage of our description of the work of ${\cal E}_1$, we use $\Psi$ for $\Phi^{1.}$, where $\Phi$ is the then-current position of the game. That is, $\Psi$ is the then-current position within the $\tgcost^T \pneg A$ component in the whole game.

  ROUTINE1: Keep granting permission until the adversary makes a move $\beta$ that satisfies the conditions of one of the following three cases, then act according to the corresponding prescriptions.

  {\em Case 1}: $\beta$ is a non-replicative move $w.\alpha$ in $\tgcost^T \pneg A$. Make the same move $w.\alpha$ in $\tgst^L A$.

  {\em Case 2}: $\beta$ is a switch move $w$ in $\tgst^L A$. Make a series of replicative moves (if necessary) in $\tgcost^T \pneg A$ so that $w$ becomes an actual node of $\Psi$. Then make the move $w$ in $\tgcost^T \pneg A$.

  {\em Case 3}: $\beta$ is a non-replicative move $w.\alpha$ in $\tgst^L A$. Make a series of replicative moves (if necessary) in $\tgcost^T \pneg A$ so that $w$ becomes an actual node of $\Psi$. Then make the move $w.\alpha$ in $\tgcost^T \pneg A$.

  Assume $\Delta$ is a run generated when $\pp$ (i.e. ${\cal E}_1$) follows ROUTINE1. $\Delta$ may be an illegal or a legal run of $\tgcost^T \pneg A \mld \tgst^L A$. If $\Delta$ is an illegal run of $\tgcost^T \pneg A \mld \tgst^L A$, it should be $\oo$-illegal because, as it is not hard to see, ${\cal E}_1$ does not make any illegal moves unless its adversary does so first.  Therefore, $\pp$ wins the whole game, and we are done. Now, for the rest of this argument, assume $\Delta$ is a legal run of $\tgcost^T \pneg A \mld \tgst^L A$. Let $\Sigma$ = $\Delta^{1.}$ and $\Pi$ = $\Delta^{2.}$. That is, $\Sigma$ is the run that took place in $\tgcost^T \pneg A $, and $\Pi$ is the run that took place in $\tgst^L A$. If there are infinitely many switch moves in $\Pi$, then $\pp$ wins the $\tgst^L A$ component, hence $\pp$ wins the overall game, i.e. $\Delta$ is a $\pp$-won run of $\tgcost^T \pneg A \mld \tgst^L A$ as desired. Now consider the case of $\Pi$ having finitely many switch moves.
  Let $v$ is the last one of such moves with infinitely many $0$s appended to it (or just the infinite string of $0$s if there are no switches at all).
If $\Pi^{\preceq v}$ is a $\pp$-won run of $A$, $\pp$ is the winner in the $\tgst^L A$ component and hence in the overall game as desired. Now assume $\Pi^{\preceq v}$ is a $\oo$-won run of $A$.  As it is easy to see with a little thought,  $\Sigma^{\preceq v}=\pneg\Pi^{\preceq v}$. So, $\Sigma^{\preceq v}$ is a $\pp$-won run of $\pneg A$. Plus, obviously $v$ is the last switch move of (not only $\Pi$ but also) $\Sigma$ with infinitely many $0$s appended to it. This makes $\Sigma$ a $\pp$-won run of $\tgcost^T\pneg A$ and hence $\Delta$ a $\pp$-won run of $\tgcost^T \pneg A \mld \tgst^L A$, as desired. This completes our proof of the uniform validity of $\tgst^T P\to\tgst^L P$.

Our remaining duty now is to prove the uniform validity of $\tgst^L P \to \tgst^T P$. We want to construct an EPM ${\cal E}_2$ that wins $\tgcost^L \pneg A \mld \tgst^T A$ for any static game $A$. We let  ${\cal E}_2$ be a machine that repeats the following routine (ROUTINE2) over and over again. At any stage of our description of the work of ${\cal E}_2$, $\Psi$ stands for $\Phi^{2.}$, where $\Phi$ is the then-current position of the game. That is, $\Psi$ is the then-current position within the $\tgst^T A $ component. Our ${\cal E}_2$ maintains the record $f$ for a mapping from the
 outer actual nodes of $\Psi$ to finite bitstrings, at any time satisfying the following condition:
\begin{equation}\label{e145}
\mbox{\em  for any two outer actual nodes $v_1\ne v_2,f(v_1)$ is not a prefix of $f(v_2)$.}
\end{equation}
At the beginning, when $\Psi$ is empty and hence the empty bitstring $\epsilon$ is its only outer actual node, the value of $f(\epsilon)$ is set to $\epsilon$.

ROUTINE2: Keep granting permission until the adversary makes a move $\beta$ that satisfies the conditions of one of the following four cases, and then act according to corresponding prescriptions.

{\em Case 1}: $\beta$ is a replicative move $\col{w}$ in $ \tgst^T A$. Let $v = f(w)$. Then update $f$ by setting $f(w0) = v0$, $f(w1) = v1$ and without changing the value of $f$ on any other (old) outer actual nodes of $\Psi$; do not make any moves.

{\em Case 2}: $\beta$ is a switch move $w$ in $\tgst^T A$. Let $w'$ be the unique outer actual node of $\Psi$ which is either $w$ or $w$ with some $0$s appended to it. And let $v = f(w')$. Then make the move $v$ in $\tgcost^L\pneg A$; leave $f$ unchanged.

{\em Case 3}: $\beta$ is a non-replicative move $w.\alpha$ in $\tgst^T A$. Let $u_1,\cdots,u_n$ be all outer actual nodes $u$ of $\Psi$ where $w$ is a prefix of $u$. Let $v_1 = f(u_1),\cdots,v_n= f(u_n)$. Make the moves $v_1.\alpha,\cdots,v_n.\alpha$ in $\tgcost^L\pneg A$; leave $f$ unchanged.

{\em Case 4}: $\beta$ is a non-replicative move $w.\alpha$ in $\tgcost^L\pneg A$. First, assume there is a (unique by (\ref{e145})) outer actual node $x$ of $\Psi$ such that $w$ is a proper extension of $f(x)$. Namely, let $w = f(x)u$. Then update $f(x)$ to $f(x)$ with as many $0$s appended to it as the number of bits in $u$. In addition, if $u$ does not contain any $1$s, then make the move $x.\alpha$ in $\tgst^T A$. Now suppose there is no outer actual node $x$ of $\Psi$ such that $w$ is a proper extension of $f(x)$. Let $u_1,\cdots\,u_n$ be all outer actual  nodes $u$ of $\Psi$ such that $w$ is a prefix of $f(u)$ ($n$ may be 0). Then make the moves $u_1.\alpha,\cdots\,u_n.\alpha$ in $\tgst^T A$; leave  $f$ unchanged.

Assume $\Delta$ is a run generated by ${\cal E}_2$. As in the preceding case, if $\Delta$ is an illegal run of $\tgcost^L \pneg A \mld \tgst^T A$, it should be $\oo$-illegal because ${\cal E}_2$ does not make any illegal moves unless its adversary does so first.  Therefore, in this case, $\pp$ wins the whole game. Now, for the rest of this argument, suppose $\Delta$ is a legal run of $\tgcost^L \pneg A \mld \tgst^T A$. Let $\Sigma = \Delta^{1.}$ and $\Pi = \Delta^{2.}$. That is, $\Sigma$ is the run that took place in $\tgcost^L\pneg A$, and $\Pi$ is the run that took place in $\tgst^T A$.  If there are infinitely many switch moves in $\Pi$, $\pp$ wins in the $\tgst^T A$ component, and hence $\Delta$ is a $\pp$-won run of $\tgcost^L \pneg A \mld \tgst^T A$ as desired. Suppose now there are finitely many switch moves in $\Pi$, and $s$ is the last such move (or is $\epsilon$ if there are no switches at all). Let $s'$ be the result of appending infinitely many $0$s to $s$.
If $\Pi^{\preceq s'}$ is a $\pp$-won run of $A$, then $\Pi$ is a $\pp$-won run of $\tgst^T A$, which means that $\Delta$ is a $\pp$-won run of $\tgcost^L \pneg A \mld \tgst^T A$ as desired.
Suppose now $\Pi^{\preceq s'}$ is a $\oo$-won run of $A$.
Obviously $\Sigma$ has as many switch moves as $\Pi$ does, so there are only finitely many switches in $\Sigma$. Let $t$ be the last switch move of $\Sigma$, or $\epsilon$ if there are no switches.
And let $t'$ be the result of appending infinitely many $0$s to $t$. With some analysis of the work of ROUTINE2, details of which are left as a technical exercise for the reader, one can see that $\Sigma^{\preceq t'} = \pneg\Pi^{\preceq s'}$.  Hence $\Sigma^{\preceq t'}$ is a $\pp$-won run of $\gneg A$, which makes $\Sigma$  a $\pp$-won run of $\tgcost^L \gneg A$ and thus $\Delta$ a $\pp$-won run of $\tgcost^L \pneg A \mld \tgst^T A$, as desired.
\end{proof}

\end{document}